%
%
%
%

\documentclass{article}  

\usepackage{latexsym,amsmath,amsfonts,amssymb,textcomp,comment,amsthm}





\def\Rom#1{\uppercase\expandafter{\romannumeral#1}}

\newcommand{\dollar}[0]{\$}

\newcommand{\TSigma}{\widetilde{\Sigma}}
\newcommand{\tildew}{\widetilde{w}}

\newcommand{\mymatrix}[2]{\left( \begin{array}{#1} #2\end{array} \right)}

\newcommand{\myvector}[1]{\mymatrix{c}{#1}}

\newtheorem{proposition}{Proposition}
\newtheorem{corollary}{Corollary}
\newtheorem{remark}{Remark}

\begin{document}

\title{Decision problems on unary probabilistic and quantum automata\thanks{Some parts of this work was done during Hirvensalo's visit to National Laboratory for Scientific Computing (Brazil) in June 2015 supported by CAPES with grant 88881.030338/2013-01 and during Yakary\i lmaz's visit to University of Turku in September 2016.}
}

\author{
Mika Hirvensalo$^{1,}$\thanks{Partially supported by the V\"ais\"al\"a foundation.} ~~~ Abuzer Yakary\i lmaz$^{2,}$\thanks{Partially supported by CAPES with grant 88881.030338/2013-01 and ERC Advanced Grant MQC.} 
\\ 
\small
$^1$Department of Mathematics and Statistics,
\\ 
\small 
University of Turku, FIN-20014, Turku, Finland
\\
\small
$^2$Faculty of Computing, University of Latvia, 
\\ 
\small
Raina bulv. 19, Riga, LV-1586, Latvia
\\
\small
\textit{mikhirve@utu.fi~~~abuzer@lu.lv}
}

\maketitle    
  
\begin{abstract} 
It is well known that the emptiness problem for binary probabilistic automata and so for quantum automata is undecidable. We present the current status of the emptiness problems for unary probabilistic and quantum automata with connections with Skolem's and positivity problems. We also introduce the concept of linear recurrence automata in order to show the connection naturally. Then, we also give possible generalizations of linear recurrence relations and automata on vectors.
\\
\textbf{Keywords:} emptiness problem, unary alphabet, probabilistic and quantum automata, Skolem's problem, positivity problem
\end{abstract}
%
%
\section{Introduction}

Finite automata are theoretical models for real-time computing with a finite memory. They form a cornerstone
of theoretical computer science, introduced in 1940's and 1950's via a series of papers such as \cite{McCulloch}, \cite{Kleene}, \cite{Mealy}, \cite{Moore}, and \cite{RS59}. Stochastic versions (probabilistic finite automata (PFAs)) were introduced in \cite{Rab63}, and their properties were extensively studied in \cite{Paz71}. In 1997, their quantum versions (quantum finite automata (QFAs)) were introduced in \cite{KW97} and \cite{MC00} (see \cite{AY15} for a recent comprehensive survey).

It is known that the emptiness (and related) problems are undecidable for PFAs on binary alphabets and since they are generalizations of PFAs, they are also undecidable for QFAs on binary alphabets \cite{AY15}. On unary alphabets, on the other hand, these problems are still open and they are known to be decidable only for some small automata.

In this paper, we present the equivalence of the emptiness problems for unary automata to Skolem's problems and positivity problems for linear recurrence relations (LRR). For this purpose, we also introduce automata versions of LRRs and some of their generalizations defined over vectors.

In the next section, we provide the necessary background. Then, we introduce the concept of linear recurrence automata and their possible generalizations in Section \ref{sec:LRA}. The basic relations on LRRs and generalized finite automata are given in Section \ref{sec:basic}. The conversions between automata models are listed in Section \ref{sec:conversions}. Finally, we present the status of emptiness problems for the models in Section \ref{sec:status}. 

\section{Backgrounds}

In this section, we present merely the necessary definitions. For a broader introduction, we refer the reader to \cite{Ei74}, \cite{Yu}, \cite{Paz71}, and \cite{AY15}.

Throughout the paper, $ \Sigma $ denotes the input alphabet not including the right end-marker $ \dollar $, and the extension $ \Sigma \cup \{\dollar\}  $ is denoted by $ \TSigma $. Correspondingly, $\tildew$ represents the string $w \dollar$ for any $w\in\Sigma^*$. Unary alphabet is chosen as $ \Sigma = \{a\} $ and the empty string is denoted by $ \varepsilon $. For a given machine $ M $ and string $ w \in \Sigma^* $, $ f_M(w) $ represents the accepting value/probability of $ M $ on $w$. We may also use symbol $ \diamond $ to present any of relations in set $\Diamond= \{ >,\geq,<,\leq,=, \neq \} $.

\subsection{Automata models}

A generalized finite automaton (GFA) $ G $ is a 5-tuple
\[
	G = ( Q,\Sigma, \{ A_\sigma \mid \sigma \in \Sigma \},v_0,f ),
\]
where
\begin{itemize}
	\item $  Q = \{q_1,\ldots,q_n\} $ is the set of states. 
    \item $ A_\sigma \in \mathbb{R}^{n \times n } $ is the transition matrix for symbol $ \sigma $, i.e. $ A[j,i] $ represents the transition value from the $ i $-th state to $j$-th state,
    \item $ v_0 $ is a $ |Q| $-dimensional real column vector called initial state, and,
    \item $f$ is a $ |Q| $-dimensional real row vector called final vector.
\end{itemize}
For a given input $ w \in \Sigma^* $, $ G $ starts its computation in $ v_0 $, reads $ w $ symbol by symbol from left to the right, and, for each symbol, the state vector is updated by multiplying the corresponding transition matrix. That is,
\[
	v_j = A_{w_{j-1}} v_{j-1},
\]
where $ 1 \leq j \leq |w| $. We denote the final state as $ v_f = v_{|w|} $. The accepting value of $ w $ is calculated as
\[
	f_G(w) = fv_f = f A_{w_{|w|}} \cdots A_{w_1}v_0,
\]
which is to say that the automaton $ G $ computes a function $\Sigma^*\to\mathbb R$. Thus, by picking a real number $ \lambda $ called cutpoint, we can split the strings into three different sets: those having accepting value (i) less than $ \lambda $, (ii) equal to $\lambda$, and (iii) greater than $ \lambda $. Each set or two of them form a language defined by $ (G,\lambda) $, i.e $ L(G,\diamond \lambda) $ for $ \diamond \in \Diamond $. More specifically, in the case that $\diamond$ equals to $>$, we have $L(G,>\lambda)= \{ w \in \Sigma^* \mid f_G(w) > \lambda \}$, etc.

Moreover, notation  $ L(G,\lambda) $ refers to the triple $ ( L(G, < \lambda),L(G, = \lambda),L(G, > \lambda) ) $ and $ L^+(G,\lambda) $ refers to the same triple as except that $ \varepsilon $ is removed. 
Remark that for any given pair $ (G,\lambda) $ and another cutpoint $ \lambda' $, there is always another GFA $ G' $ (that can be designed based on $ G $ by adding an additional state) such that
\[
	L(G,\lambda) = L(G',\lambda'),
\]
Therefore, the choice of the cutpoint is not essential unless the number of states is significant.

Let $ G $ be a given GFA, $ \lambda $ be a cutpoint, and $ \diamond \in \Diamond  $, then determining whether $ L(G,\diamond \lambda) = \emptyset $  is an {\em emptiness} problem. Remark that the description of $ G $ and $ \lambda $ should be finite in order to be a ``reasonable'' computational problem. So, we can formulate the emptiness problem for computable numbers in general and for rational (or integer) numbers in a restricted case. The emptiness problems can be formulated for all the automata models in this paper in a straightforward way, but the special cases are emphasized.

Probabilistic finite automaton (PFA) is a special case of GFA such that only stochastic transition matrices and vectors are allowed. Moreover, a PFA gives its decision based on a subset of states called accepting states. Formally,  a PFA $P$ is 5-tuple
\[
	P = (Q,\TSigma,\{ A_\sigma \mid \sigma \in \TSigma \},v_0,Q_a),
\]
where, different from a GFA,
\begin{itemize}
\item $ A_\sigma $ is a (left) stochastic transition matrix for symbol $ \sigma $, i.e. $ A[j.i] $ represents the probability of $ P $ going from $ i $-th state to $ j $-th state after reading symbol $ \sigma $,
\item $ v_0 $ is a stochastic vector called initial probabilistic state, and,
\item $ Q_a \subseteq Q $ is called the set of accepting states.
\end{itemize}
For a given input $ w \in \Sigma^* $, $ P $ starts the computation in $ v_0 $ and reads $ w\dollar $ symbol by symbol in the same way of a GFA:
\[
	v_j = A_{\tildew_{j-1}} v_{j-1},
\]
where $ 1\leq j \leq |\tildew|$. The final state is $v_f = v_{|\tildew|} $. The accepting probability of $ w $ is calculated as
\[
	f_P(w) = \sum_{q_i \in Q_a} v_f[i].
\]
The languages recognized by $ P $ are defined in a similar way but now $ \lambda \in [0,1] $. Any language defined as $ L(P,> \lambda) $ (resp. $ L(P,\neq \lambda) $) is called stochastic (resp. exclusive stochastic) \cite{Rab63,Paz71}. Moreover, a nondeterministic finite automaton (NFA) can be defined as a PFA $ P $ of form $ L(P,>0) $. Similarly, a universal finite automaton (UFA) can be defined as a PFA $ P $ of form $  L(P,=1)$. So, for any PFA $P$, the language $ L(P,>0) $ or $ L(P,=1) $ (or $ L(P,<1) $ or $ L(P,=0) $) must be regular. On the other hand, any $ L(P,\diamond \lambda) $ for some $ \lambda \in (0,1) $ and $ \diamond \in \Diamond $ does not need to be regular.

If each transition matrix of a PFA is double stochastic (both column and row summations are 1), it is called {\em bistochastic} PFA (BPFA) (see \cite{Tur75}).

A quantum finite automaton (QFA) is a non-trivial generalization of PFA that can be in mixture of quantum states (mixed states) and use superoperators for transitions \cite{Hir11,YS11A}. Formally, a QFA $ M $ is a 5-tuple
\[
	M = (Q,\Sigma,\{ \mathcal{E}_\sigma \mid \sigma \in \TSigma \},\rho_0,Q_a),
\]
where, different from PFA, $ \mathcal{E}_\sigma = \{ E_{\sigma,j} \mid 1 \leq j \leq l_\sigma \} $ is a superoperator with $ l_\sigma $ operation elements that is applied when reading symbol $ \sigma $ and $ \rho_0 $ is the initial mixed state.
For a given input $ w \in \Sigma^* $, $ M $ starts the computation in $ \rho_0 $, reads $ w\dollar $ symbol by symbol in the same way of a PFA:
\[
	\rho_j = \mathcal{E}_{\tildew_j} (\rho_{j-1}) = \sum_{k=1}^{l_\sigma} E_{\sigma,k} \rho_{j-1} E_{\sigma,k}^\dagger ,
\]
where $ 1\leq j \leq |\tildew|$. The final state is $\rho_f = \rho_{|\tildew|} $. The accepting probability of $ w $ is calculated as
\[
	f_M(w) = \sum_{q_i \in Q_a} Tr(\rho_f[i,i]).
\]

Similar to PFAs, a nondeterministic quantum finite automaton (NQFA) is a QFA $ M $ that can define the single language $ L(M,>0) $ and a universal quantum finite automaton (UQFA) is a QFA $ M $ that can define the single language $ L(M,=1) $. On contrary to NFAs and UFAs, NQFAs and UQFAs can define nonregular languages \cite{YS10A}.

\subsection{Linear recurrence relations}

A linear recurrence relation (LRR) $u$ with depth $k>0$ (defined over real/complex numbers) is a double (initial values and coefficients)
\[
	u=( ( u_0,u_1,\ldots,u_{k-1} ), ( a_1,a_1,\ldots,a_{k} ))
\]
that defines an infinite sequence
\[
	u_0,u_1,\ldots,u_n,\ldots,
\]
where $ u_n = a_1 u_{n-1} + \cdots + a_k u_{n-k} $ for $ n \geq k $.

\subsection{Skolem's problem}
The Skolem's problem (known also as the Pisot problem, see \cite{HaHaHiKa} for a general introduction) is to determine whether a given linear recurrence relation
\[
u_n=a_1u_{n-1}+a_2u_{n-2}+\cdots+a_ku_{n-k}
\]
over integers with initial values $u_0$, $u_1$, $u_2$, $\ldots$, $u_{k-1}$, has a member in the sequence with value of 0 ($u_n=0$). Number $k$ in the definition is referred as to the {\em depth} of the recursion.

Currently it is not known if the Skolem's problem is decidable. The decidability is known for recurrence depths less than $5$, and allegedly for recursion depth $5$ \cite{HaHaHiKa}.

{\em The positivity problem} is to decide, for a linear recurrent sequence $u_n$, whether
$u_n\ge 0$ for each $n\ge 1$. {\em The strict positivity problem} is to decide whether $u_n>0$ for each $n\ge 1$.

\section{Linear recurrence automata}
\label{sec:LRA}

In this section, we introduce automata versions of LRRs and also some of their possible generalizations.

\subsection{Unary versions}

A linear recurrence automaton (LRA) $U = (u,\Sigma)$ with depth $k>0$ is a unary automaton composed by two elements: the LRR $u$ and the unary alphabet $ \Sigma $.  For each string $ a^j $, $U$ assigns $u_j$ as the accepting value:
\[
	f_{U}(a^j) = u_j, \mspace{20mu} j \geq 0.
\]

Consider the following LRA with depth $k>0$: 
\[
	U= (\{a\},u=(1,0,\ldots,0),(0,\ldots,0,1)).
\]
It is clear that $ u_n = u_{n-k} $ and the sequence contains only zeros except 
\[
	u_0=u_k=u_{2k}= \cdots = 1.
\] 
Therefore, all the following languages are the same and identical to $ \mathtt{MOD_k} = \{ a^j \mid j \mod k \equiv 0 \} $:
\[
	L(U,=1), L(U,\geq 1), L(U,>0), L(U,\neq 0). 
\]

A LRR can also be defined over mathematical objects other than numbers. A LRR can be generalized by defining over $ \mathbb{R}^m $ (or $ \mathbb{C}^n $) ($m>0$), called $m$-dimensional LRR over vectors (LRR-V) with depth $ k>0 $. Formally LRR-V $ V $ is a double
\[
	v = ( (v_0,v_1,\ldots,v_{k-1}), (A_1,\ldots,A_k)\})
\]
that defines an infinite sequence of vectors
\[
	v_0,v_1,\ldots,v_m,\ldots,
\]
where $ v_n = A_1 v_{n-1}+\cdots+A_k v_{n-k} $ for $ n \geq k $.

The automaton version of an m-dimensional LRR-V with depth $k$ (LRVA) is a triple $ V = (v,\Sigma,f) $, where $ v $ is a LRR-V and $ f \in \mathbb{R}^m $ is a row vector. For a given string $a^j$ over unary alphabet $\Sigma = \{a\}$, its accepting value is calculated as
\[
	f_V(a^j) = f \cdot v_j.
\]
The reader can notice that any $ m $-dimensional LRVA with depth 1 is an $m$-state unary GFA. Thus LRVAs can be seen as generalizations of unary GFAs.

The probabilistic version of a LRVA (PLRVA) is a restricted LRVA such that
\begin{enumerate}
	\item All elements in the sequence including the initial ones are stochastic vectors.
    \item Each matrix in $ \{ A_1,\ldots,A_k \} $ is a non-negative multiple of a stochastic matrix providing that the summation $ A_1 + \cdots + A_k $ gives a stochastic matrix, i.e. $ A_i = d_i B_i $ ($ d_i \geq 0 $) for some stochastic matrix $ B_i $ and $ d_1 + \cdots + d_m = 1 $.
    \item Each entry of $ f $ is in $ [0,1] $.
\end{enumerate}
Thus, the accepting values of each string is in $ [0,1] $, which can also be called accepting probability. It is clear that any PLRVA with depth 1 is a unary PFA.

We leave as a future work the definitions of complex and possible quantum versions of LRVAs.

\subsection{Binary versions}
Since LRRs can be seen as unary automata, it is natural to think the binary (or $n$-ary) versions of LRRs and their automata versions.

An LRR defines an infinite sequence. In the binary case, we define an infinite binary tree where each node represents a number (or an object) and so each top to bottom path is an infinite sequence. In order to specify such a sequence, we label each node as a string defined over $ \Sigma = \{a,b\} $. The root is empty string. The first level nodes are labeled from left to right as $ a $ and $ b $. The second level nodes are labeled from left to right as $ aa $, $ ab $, $ ba $, and $ bb $. In general, if a node is labeled with string $ w $, its left child is labeled with $ wa $ and its right child is labeled with $ wb $. 

A tree linear recurrence relation (T-LRR) $ t $ with depth $ k $ is a triple:
\[
	t = (\Sigma, (t_w \mid w \in \Sigma^{\leq k-1}),(a_1,\ldots,a_k,b_1,\ldots,b_k) ),
\]
where $ \Sigma $ is a finite alphabet, $ \Sigma^{\leq k-1} $ is set of all strings with lengths less than $ k $, $ t_w $ is the initial value for the node/string $ w $ for $ w \in \Sigma^{\leq k-1} $, and $ (a_1,\ldots,a_k,b_1,\ldots,b_k)  $ are coefficients. For any given $ w=w_1 w_2 \cdots w_{n-k} \cdots w_{n-1} w_{n} \in \Sigma^* $ with length $ n \geq k $, $ t_w $ can be calculated as
\[
	t_w = c_1 t_{w_{n-1}} + c_2 t_{w_{n-2}} + \cdots + c_k t_{w_{n-k}},
\]
where $  c_i $ is $ a_i $ if $ w_{n-i} $ is symbol $ a $ and $  c_i $ is $ b_i $ if $ w_{n-i} $ is symbol $ b $.

Any T-LRR can also be seen as an automaton (T-LRA) since its definition includes the alphabet. With depth 1, we obtain a 1-state GFA, for example. If $ |\Sigma| = 1 $, then we obtain a LRR. The definitions of T-LRR and T-LRA can be extended with some other mathematical objects like vectors, i.e. T-LRR-V and T-LRVA, respectively. Then, we can obtain generalizations of GFAs. In case of probabilistic T-LRA over vectors (T-PLRVA), we can assume that each matrix is some non-negative multiple of a stochastic matrix and each vector in the tree is normalized to be stochastic. We can also assume to fix each multiple coefficient to $ \frac{1}{k} $ for the simplicity, where $ k $ is the depth of recursion.

\section{Basic Facts}
\label{sec:basic}

Linearly recurrent sequences can be characterized in multiple ways. We give their direct relations with GFAs, and so with PFAs and QFAs.
 
\begin{proposition}\label{FormulationLemma} For an integer sequence $u_0$, $u_1$, $u_2$, $\ldots$,
the following are equivalent:
\begin{enumerate}
\item The sequence $u_n$ is a linear recurrent sequence.
\item For $n\ge 1$, $u_n=M^n[k,1]$, where $M\in\mathbb Z^{k\times k}$ for some $k$.
\item For $n\ge 1$, $u_n= x M^n y $, where
$ x \in \mathbb{Z}^k $ is a row vector,  $ y \in\mathbb Z^k$ is a column vector, and $M\in\mathbb Z^{k\times k}$ for some $k$.
\end{enumerate}
\end{proposition}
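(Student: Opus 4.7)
The plan is to establish the cycle of implications $(1) \Rightarrow (2) \Rightarrow (3) \Rightarrow (1)$, which yields all three equivalences simultaneously.

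For $(1) \Rightarrow (2)$, I would build an integer matrix directly exhibiting the required form. Given a recurrence $u_n = a_1 u_{n-1} + \cdots + a_k u_{n-k}$, let $C \in \mathbb{Z}^{k \times k}$ be its companion matrix (so that the state vector $v_n = (u_n, u_{n+1}, \ldots, u_{n+k-1})^T$ satisfies $v_{n+1} = C v_n$), and let $v_1 = (u_1, \ldots, u_k)^T \in \mathbb{Z}^k$. I propose to take
\[
\tilde M = \begin{pmatrix} C^T & 0 \\ v_1^T & 0 \end{pmatrix} \in \mathbb{Z}^{(k+1) \times (k+1)}.
\]
A direct induction on $n$ then shows that the last row of $\tilde M^n$ equals $(v_n^T, 0)$: the base case $n=1$ holds by construction, and the step follows from $v_{n+1}^T = v_n^T C^T$ combined with the fact that multiplication by $\tilde M$ sends a row of the shape $(w^T, 0)$ to $(w^T C^T, 0)$. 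Reading off the first entry of the last row gives $\tilde M^n[k+1, 1] = u_n$ for every $n \ge 1$, which is exactly formulation $(2)$.

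The implication $(2) \Rightarrow (3)$ is immediate: take $x = e_k^T$ and $y = e_1$, both integer vectors, so that $x M^n y = M^n[k, 1] = u_n$.

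For $(3) \Rightarrow (1)$, I would invoke the Cayley--Hamilton theorem. Since $M$ has integer entries, its characteristic polynomial $p(t) = t^k + c_{k-1} t^{k-1} + \cdots + c_0$ lies in $\mathbb{Z}[t]$ and annihilates $M$, giving $M^{n+k} = -c_{k-1} M^{n+k-1} - \cdots - c_0 M^n$ for every $n \ge 0$. Multiplying on the left by $x$ and on the right by $y$ yields
\[
u_{n+k} = -c_{k-1} u_{n+k-1} - \cdots - c_0 u_n,
\]
which is a linear recurrence of depth $k$ with integer coefficients. The main delicate point in this proof is $(1) \Rightarrow (2)$: a naive companion matrix construction with an arbitrary initial vector naturally gives only the bilinear form of $(3)$, so the real work is choosing an augmentation that converts the initial-vector dependence into an additional matrix row while keeping the target entry at the prescribed bottom-left position and avoiding any index shift. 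The remaining two implications are essentially textbook.
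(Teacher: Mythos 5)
Your proof is correct and follows essentially the same route as the paper: the cycle $(1)\Rightarrow(2)\Rightarrow(3)\Rightarrow(1)$, with a companion matrix augmented by one extra dimension carrying the initial values for $(1)\Rightarrow(2)$, the trivial choice of standard basis vectors for $(2)\Rightarrow(3)$, and the Cayley--Hamilton theorem for $(3)\Rightarrow(1)$. The only cosmetic difference is that you store the initial state vector in an added bottom row rather than the paper's added first column (a transposed variant of the same augmentation), and your induction is if anything stated more cleanly than the paper's.
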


\begin{proof} Implication (1) $\implies (2)$: 
Assume that a sequence $u_n$ is given by first fixing $u_0$, $\ldots$, $u_{k-1}$, and for $n\ge k$ defined by
recurrence
\[
u_{n}=a_{k-1}u_{n-1}+\cdots+a_1u_{n-k+1}+a_0u_{n-k}.
\]
We define
\begin{equation}
\label{RecMatrix}
M_1=\left(\begin{array}{c|c}
a_{k-1} \cdots a_2 ~ a_1 & a_0 
\\ \hline
 & 0 \\
 I & \vdots \\
 & 0 \\
\end{array}\right)
\end{equation}
It is easy to see that for each $n\ge 0$,
\[
u_n= x M_1^n y,
\]
where $ x=(u_{k-1} ~\cdots~u_1~u_0)$, and $y=(0~\cdots~0~1)^T$. We denote $\vec 0=(0~~0~~\cdots~~0)$ and define a $(k+1)\times (k+1)$-matrix $M$ by
\[
M=\left(\begin{array}{cc}
0 & \vec 0^T \\
M_1 y & M_1
\end{array}\right)
=
\mymatrix{c|c}{ 0 & \vec 0 \\ \hline u_k & \\ u_{k-1} & \\ \vdots & ~~~~M_1~~~~ \\ u_2 & \\ u_1 & }
\]
Inductively we see that
\[
M^n=\left(\begin{array}{cc}
0 &\vec 0 \\
M_1^n x & M_1^n
\end{array}\right),
\]
and, furthermore, that
\[
(M^n)[k+1,1]=
(0~~y)
\left(\begin{array}{cc}
0 &\vec 0 \\
M_1^n x & M_1^n
\end{array}\right)
\myvector{1\\ \vec 0}
=
x M_1^n y = u_n
\]
whenever $n\ge 1$.

Implication (2) $\implies$ (3) follows directly from equation
\[
M^n[k+1,1] = (0~~y) M^n \myvector{1 \\ \vec 0} = u_n.
\]

Implication (3) $\implies$ (1): 
Let  $p(x)=x^k-a_{k-1}x^{k-1}-\cdots -a_1x-a_0$ be the
characteristic polynomial of matrix $M$. According to the Cayley-Hamilton theorem \cite{Cohn}
\[
M^k=a_{k-1}M^{k-1}+\ldots+a_1M+a_0I,
\]
and consequently
\[
	M^{n}=a_{k-1}M^{n-1}+\cdots+a_1M^{n-k+1}+a_0M^{n-k}
\]
for any $n\ge k$. It follows that
\[
	x M^{n} y =a_{k-1} x M^{n-1} y +\cdots+a_1 x M^{n-k+1} y +a_0
x M^{n-k} y,
\]
which is to say that
\begin{equation}\label{rec1}
u_{n}=a_{k-1}u_{n-1}+\cdots+a_1u_{n-k+1}+a_0u_{n-k},
\end{equation}
so (\ref{rec1}) is the desired recurrence.
\qed\end{proof}

\begin{corollary}
	Let $ u_n $ be a linear integer recurrent sequence  with depth $k>0$. Then, there exists a $(k+1)$-state unary integer GFA $G$ such that 
    \[
    	u_n = f_G(a^{n})
    \]
    for $n \geq 1$.
\end{corollary}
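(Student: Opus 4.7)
The plan is to read the corollary off directly from Proposition~\ref{FormulationLemma}. The key observation is that the proof of the implication $(1)\implies(2)$ already constructs, from a depth-$k$ recurrence, a $(k+1)\times(k+1)$ integer matrix $M$ such that
\[
u_n = M^n[k+1,1] \qquad \text{for } n\ge 1.
\]
So the dimension $k+1$ demanded by the corollary is exactly what falls out of that construction, not something we need to optimize.

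To turn this matrix identity into a GFA, I would set $Q=\{q_1,\ldots,q_{k+1}\}$, $\Sigma=\{a\}$, and take the single transition matrix $A_a = M$. For the boundary data, note that the entry $M^n[k+1,1]$ picks out the $(k+1,1)$ coordinate, so I would choose the initial column vector $v_0 = e_1$ (the first standard basis vector in $\mathbb{Z}^{k+1}$) and the final row vector $f = e_{k+1}^T$. All of these live over $\mathbb{Z}$, so the resulting GFA is an integer GFA with exactly $k+1$ states.

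The verification is then a one-line computation:
\[
f_G(a^n) \;=\; f\,A_a^n\,v_0 \;=\; e_{k+1}^T\,M^n\,e_1 \;=\; M^n[k+1,1] \;=\; u_n,
\]
valid for every $n\ge 1$ by the cited part of Proposition~\ref{FormulationLemma}. There is no genuine obstacle here; the only thing to be careful about is the index shift, namely that the statement guarantees agreement for $n\ge 1$ but not necessarily at $n=0$ (indeed $M^0 = I$ gives $e_{k+1}^T I\,e_1 = 0$, while $u_0$ is arbitrary), which is why the corollary is stated for $n\ge 1$ as well.
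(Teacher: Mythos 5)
Your proposal is correct and is exactly the intended derivation: the paper states this corollary without a separate proof because the $(k+1)\times(k+1)$ integer matrix $M$ built in the proof of Proposition~\ref{FormulationLemma} (implication $(1)\Rightarrow(2)$) already satisfies $u_n = M^n[k+1,1]$ for $n\ge 1$, and packaging it with $v_0=e_1$ and $f=e_{k+1}^T$ as the single transition matrix of a unary GFA is the one-line step you describe. Your remark about the index shift at $n=0$ correctly explains why the statement is restricted to $n\ge 1$.
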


\begin{corollary}
	Let $ G $ be a $k$-state unary integer GFA. Then, there exists a linear (possibly not integer) recurrent sequence $u_n$ with depth $k$ such that
    \[
    	u_n = f_G(a^{n})
    \]
    for $n \geq 0$, where $k>0$.
\end{corollary}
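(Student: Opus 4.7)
My plan is to specialize the argument of implication $(3)\implies(1)$ in Proposition~\ref{FormulationLemma} directly to the data of $G$. Writing $G = (Q, \{a\}, \{A_a\}, v_0, f)$ with $|Q| = k$, the accepting value is $f_G(a^n) = f A_a^n v_0$ for every $n\geq 0$, so I set $u_n := f A_a^n v_0$. The first $k$ terms $u_0, u_1, \ldots, u_{k-1}$ will play the role of initial values of the target LRR.

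The core step is to apply the Cayley--Hamilton theorem to the $k\times k$ matrix $A_a$. Let
\[
p(x) = x^k - c_{k-1}x^{k-1} - c_{k-2}x^{k-2} - \cdots - c_1 x - c_0
\]
be its characteristic polynomial; this polynomial has degree exactly $k$. Cayley--Hamilton gives $A_a^k = c_{k-1} A_a^{k-1} + \cdots + c_1 A_a + c_0 I$, and multiplying on the right by $A_a^{n-k}$ yields $A_a^n = c_{k-1} A_a^{n-1} + \cdots + c_0 A_a^{n-k}$ for every $n \geq k$. Sandwiching this identity between $f$ on the left and $v_0$ on the right produces
\[
u_n = c_{k-1} u_{n-1} + c_{k-2} u_{n-2} + \cdots + c_0 u_{n-k},
\]
valid for all $n\geq k$. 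Together with the initial segment $(u_0,\ldots,u_{k-1})$ this is precisely a depth-$k$ LRR whose sequence coincides with $f_G(a^n)$ for every $n\geq 0$.

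Two small remarks would justify the precise wording. The recurrence has depth exactly $k$ because we used the characteristic polynomial, which has degree $k$; the minimal polynomial of $A_a$ could in principle be of strictly smaller degree and yield a shorter recurrence, but the characteristic polynomial suffices and matches the claimed depth. The coefficients $c_i$ are, up to sign, those of the characteristic polynomial of $A_a$, so they are integers whenever $A_a$ has integer entries; the parenthetical ``possibly not integer'' in the statement is then the cautious qualifier that handles the fact that the vectors $v_0$ and $f$ (and hence the initial values $u_i$) are, by the general definition of a GFA, only required to be real. I do not anticipate any serious obstacle here; the argument is essentially a direct reading of the final part of Proposition~\ref{FormulationLemma}, with $x$ and $y$ replaced by $f$ and $v_0$.
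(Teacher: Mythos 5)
Your proof is correct and follows essentially the same route as the paper, which derives this corollary from implication $(3)\implies(1)$ of Proposition~\ref{FormulationLemma}: apply the Cayley--Hamilton theorem to the transition matrix and sandwich the resulting identity between $f$ and $v_0$. Your closing remarks on the depth being exactly $k$ and on integrality are accurate side observations that do not change the argument.
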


\begin{proposition}\label{CSP} Any constant sequence is a linear recurrent sequence. Moreover, if $u_n$ and $v_n$ are linear recurrent sequences, so are $u_nv_n$ and $u_n+v_n$.
\end{proposition}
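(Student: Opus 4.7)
The plan is to leverage the matrix characterization from Proposition~\ref{FormulationLemma}(3): a sequence is linearly recurrent if and only if it admits a representation $u_n = xM^n y$ for some row vector $x$, square integer matrix $M$, and column vector $y$. Passing through this closed form reduces all three claims to short pieces of linear algebra.

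For a constant sequence $u_n \equiv c$, I would argue directly from the definition of an LRR without invoking Proposition~\ref{FormulationLemma} at all: take depth $k=1$, coefficient $a_1 = 1$, and initial value $u_0 = c$. The recurrence $u_n = u_{n-1}$ then produces the constant sequence.

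For the sum, suppose $u_n = x_1 M_1^n y_1$ and $v_n = x_2 M_2^n y_2$. I would form the block diagonal matrix $M = M_1 \oplus M_2$, the concatenated row vector $x = (x_1, x_2)$, and the stacked column vector $y$ built from $y_1$ and $y_2$. Since $M^n = M_1^n \oplus M_2^n$, a one-line computation gives $x M^n y = x_1 M_1^n y_1 + x_2 M_2^n y_2 = u_n + v_n$. Invoking Proposition~\ref{FormulationLemma} in the reverse direction then supplies an explicit recurrence for $u_n + v_n$.

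For the product, the standard trick is to move to the Kronecker product: take $M = M_1 \otimes M_2$, $x = x_1 \otimes x_2$, and $y = y_1 \otimes y_2$. The mixed-product identity $(A \otimes B)(C \otimes D) = AC \otimes BD$ yields $M^n = M_1^n \otimes M_2^n$ by induction, and since the result is a Kronecker product of two scalars we obtain $x M^n y = (x_1 M_1^n y_1)(x_2 M_2^n y_2) = u_n v_n$. Again Proposition~\ref{FormulationLemma} converts this closed form into a linear recurrence.

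The only bookkeeping issue, which I expect to be the main (though mild) annoyance, is that Proposition~\ref{FormulationLemma}(3) is stated only for $n \ge 1$, whereas an LRR packages both initial values and a recurrence valid from index $k$ onward. This is harmless: Cayley--Hamilton applied to the characteristic polynomial of the combined matrix $M$ yields a recurrence that holds for all sufficiently large $n$, and the finitely many earlier values of $u_n + v_n$ or $u_n v_n$ are simply absorbed into the initial data of the new LRR.
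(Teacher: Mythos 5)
Your proposal is correct and follows essentially the same route as the paper: the constant case is handled trivially from the definition, and the sum and product cases use the direct-sum and Kronecker-product constructions on the $xM^ny$ representation from Proposition~\ref{FormulationLemma}, converted back to a recurrence via that same proposition. Your extra remarks on the mixed-product identity and the $n\ge 1$ bookkeeping merely make explicit what the paper leaves implicit.
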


\begin{proof} The claim for constant sequences is trivial. Using Proposition \ref{FormulationLemma}, we can write $u_n= x_1 M_1^n y_1 $, and $v_n= x_2 M_2^n y_2$. Now $u_n v_n=(x_1 \otimes x_2)(M_1\otimes M_2)^n(y_1\otimes y_2)$ (tensor product construction)
and $u_n+v_n=(x_1 \oplus x_2)(M_1\oplus M_2)^n(y_1\oplus y_2)$ (direct sum construction), and Proposition \ref{FormulationLemma} implies that these sequences are linear recurrent. 
\qed\end{proof}

\begin{proposition} 
	\label{pro:LRR-reductions}
	Skolem's problem is reducible to the positivity problem and the strict positivity problem. The strict positivity problem is reducible to the positivity problem.
\end{proposition}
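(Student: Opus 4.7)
The plan is to build all three reductions on top of the closure properties established in Proposition~\ref{CSP}: constant sequences are linear recurrent, and the class of linear recurrent sequences is closed under termwise addition and termwise multiplication. Everything else is bookkeeping with integer arithmetic.

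For the reduction of strict positivity to positivity, I would start from the observation that, for an integer-valued linear recurrence $u_n$, the condition $u_n>0$ is equivalent to $u_n\ge 1$, i.e.\ to $u_n-1\ge 0$. Since the constant sequence $1,1,\ldots$ is linear recurrent and the sum of two linear recurrent sequences is linear recurrent, the sequence $u_n-1$ is itself a linear recurrent sequence, whose defining data can be constructed effectively from that of $u_n$ (for instance via the direct-sum construction used in the proof of Proposition~\ref{CSP}). Deciding strict positivity of $u_n$ is thus the same as deciding positivity of $u_n-1$.

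For the reduction of Skolem's problem to the strict positivity problem, I would pass to the squared sequence $v_n=u_n^2$. By the product part of Proposition~\ref{CSP} this is again a linear recurrent sequence (the matrix witnessing it is the Kronecker square of the matrix for $u_n$), with integer values and initial data computable from the description of $u_n$. Since $v_n\ge 0$ automatically, we have $v_n>0$ iff $u_n\neq 0$, so $v_n$ is strictly positive for all $n\ge 1$ exactly when no term of $u$ with index $\ge 1$ vanishes. A decision procedure for strict positivity therefore decides the complement of Skolem's problem, which is the same thing as deciding Skolem's problem itself.

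Finally, the reduction of Skolem's problem to the positivity problem is obtained by composing the previous two constructions: from $u_n$ build $w_n=u_n^2-1$, which is a linear recurrent integer sequence. Because $u_n\in\mathbb Z$, we have $u_n\neq 0$ iff $u_n^2\ge 1$ iff $w_n\ge 0$, so $w_n$ satisfies positivity iff no term $u_n$ with $n\ge 1$ equals zero. There is no real obstacle in the argument; the one point worth highlighting is that all three reductions rely on the integrality of the sequence (to pass between the open and closed conditions $>0$ and $\ge 1$, and between $\neq 0$ and $|{\,\cdot\,}|\ge 1$), which is the standard setting stipulated for Skolem's problem in the previous subsection.
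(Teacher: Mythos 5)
Your proof is correct and takes essentially the same route as the paper's: both rely on Proposition~\ref{CSP} to form the squared sequence and constant shifts, and both exploit integrality of the terms to pass between strict and non-strict inequalities. The only difference is a sign/complement convention --- the paper uses $-u_n^2+1$, $-u_n^2$, and $2u_n+1$ so the pointwise equivalences match the language-emptiness formulation directly, while your $u_n^2$, $u_n^2-1$, and $u_n-1$ reduce to the complementary problems, which is immaterial for the decidability transfer intended here.
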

\begin{proof}  For the first reduction: Given a linear recurrent sequence $u_n$, we can use the constructions of Proposition \ref{CSP} to create another linear recurrent sequence $ v_n = -u_n^2 +1 $. It is then clear that $ v_n > 0 $ if and only if $ u_n = 0 $.

For the second reduction: Similarly, we create $ v_n = -u_n^2$. It is then clear that $v_n \geq 0$ if and only if $u_n = 0$. 

For the third reduction: In the same way, we create $ v_n = 2u_n +1 $. It is then clear that $v_n > 0$ if and only if $u_n \geq 0$. 
\qed\end{proof}

\begin{remark}\label{Rem02}
Positivity problem is known to be decidable for linear recurrences with depth at most $5$ \cite{OW14A}. In addition, the positivity problem for {\em simple} recurrences (those having no multiple roots) is known to be decidable for recursion depth at most $9$ \cite{OW14B}. The {\em ultimate positivity problem} is decidable for simple linear recurrences of any order \cite{OW14C}.
\end{remark}

\section{Conversions between unary automata models}
\label{sec:conversions}

Conversions between unary automata models can be summarized as follows (see \cite{Tur75,Hir11,YS10A,YS11A,SayY14} for the conversion between GFAs, PFAs, and QFAs):
\begin{itemize}
	\item For any given integer LRA $ U = (u,\Sigma) $ with depth $ k $, there exists a $(k+1)$-state integer GFA $ G $ such that 
	\[
		L^+(U,0) = L^+(G,0).
	\]
	\item For any given rational $n$-state GFA $ G $ and cutpoint $ \lambda \neq 0 $, there exist a rational $ (n+1) $-state rational $ G' $ such that
	\[
		L(G,\lambda) = L(G',0)
	\]
	Moreover, there exists an integer $ (n+1) $-state GFA $ G'' $ such that
	\[
		L(G,\lambda) = L(G',0) = L(G'',0).
	\]
	\item For any given $ n $-state rational GFA $ G $, there exist $ (n+2) $-state rational PFA $ P $ and $ (n+2) $-state algebraic number QFA $M$ such that
	\[
		L(G,0) = L(P,\frac{1}{n+2}) = L(M,\frac{1}{n+2}).
	\]
	\item For any given $ n $-state rational QFA $ M $, there exists an $ n^2 $-state rational GFA $ G $ such that
	\[
		L(M,\lambda) = L(G,\lambda) \mbox{ for any } \lambda \in [0,1].
	\]
	\item For any given QFA $ M $, there exists a QFA $ M' $ such that
	\[
		L(M,\neq \lambda) = L(M',\neq \lambda' ) \mbox{ for any } \lambda, \lambda' \in [0,1].
	\]
	\item For any given $ n $-state integer GFA $ G $, there exists an integer LRA $ U $ with depth $ n $ such that
	\[
		L^+(G,\lambda) = L^+(U,\lambda) \mbox{ for any } \lambda \in \mathbb{R}.
	\]
\end{itemize}

\section{The status of emptiness problems}
\label{sec:status}

In this section, we list the decidable problems and the problems which are still open with their restricted but decidable cases.   

\subsection{Problems}

The decision problems for LRRs can be formulated by LRAs as follows:
\begin{itemize}
\item Skolem's problem: 
	\begin{itemize}
		\item For a given integer LRA $ U $, is $ L(U,=0) $ empty? 
		\item More generally, is $ L(U,=\lambda) $ empty for a given $ \lambda \in \mathbb{R} $?
	\end{itemize}	 
\item The positivity problem: 
	\begin{itemize}
		\item For a given integer LRA $ U $, is $ L(U,\geq 0) $ empty? 
		\item More generally, is $ L(U,\geq \lambda) $ empty for a given $ \lambda \in \mathbb{R} $?
		\item It is also equivalent to this decision problem: is $ L(U,\leq \lambda) $ empty for a given $ \lambda \in \mathbb{R} $? 
	\end{itemize} 
\item The strict positivity problem: 
	\begin{itemize}
		\item For a given integer LRA $ U $, is $ L(U,> 0) $ empty? 
		\item More generally, is $ L(U,> \lambda) $ empty for a given $ \lambda \in \mathbb{R} $?
		\item It is also equivalent to this decision problem: is $ L(U,< \lambda) $ empty for a given $ \lambda \in \mathbb{R} $? 
	\end{itemize} 
\item The \textit{exclusivity}\footnote{We propose this term inspired by the definition of exclusive stochastic languages.} problem:
	\begin{itemize}	
	 	\item For a given integer LRA $ U $, is $ L(U, \neq 0) $ empty? 
		\item More generally, is $ L(U, \neq \lambda) $ empty for a given $ \lambda \in \mathbb{R} $?
	\end{itemize}
\end{itemize}

Due to Proposition \ref{pro:LRR-reductions}, we know that Skolem's problem can be reduced to the strict positivity problem that can be reduced to  positivity problem. Thus the most general problem is the positivity problem. 

Due to the conversions given in the previous section, each decision problem given for LRRs and so LRAs has an equivalent emptiness problem for GFAs, PFAs, and QFAs. In the above list, it is sufficient to change LRA $ U $ with unary GFA, PFA, or QFA model in order to obtain the corresponding emptiness problem. 

However, remark that the value of $ \lambda $ can be only in $ [0,1] $ for PFAs and QFAs (the cases for the other $\lambda$ values are trivial) and we should also note that their behaviors are different on the borders.

\subsection{Decidable problems}

PFAs define only regular languages when $ \lambda $ is 0 or 1. So, any emptiness problem in such case is decidable.

The emptiness problem for NQFAs defined with algebraic numbers is known to be decidable \cite{DHRSY14}. Therefore, the problem of whether $ L(M,>0) $ or $ L(M,<1) $ is empty set for a given QFA $ M $ (defined with algebraic numbers) is decidable. Remark that the problem of whether $ L(M,\geq 0) $ or $ L(M,\leq 1) $ is empty for a given QFA $M$ is trivial. 

The remaining cases are to determine whether  $ L(M, = 0) $ or $ L(M, = 1) $ is empty for a given QFA $ M $ (the emptiness problem for UQFAs) are still open. 

For a given QFA $ M $, $ L(M,>0) $ is the same as $ L(M,\neq 0) $, and so we can also follow that the problem of determining whether $ L(M, \neq \lambda) $ is empty set is also decidable for any $ \lambda \in [0,1] $. Thus, the exclusivity problem is decidable for LRAs, GFAs, PFAs, and QFAs.

\subsection{Open problems}

Now, we can list the open cases:
\begin{itemize}
	\item Skolem's problem: LRAs, GFAs, PFAs when $ \lambda \in (0,1) $, and QFAs when $ \lambda \in [0,1] $. 
	\item The positivity problem: LRAs, GFAs, PFAs when $ \lambda \in (0,1) $, and QFAs when $ \lambda \in (0,1) $.
	\item The strict positivity problem: LRAs, GFA, PFAs when $ \lambda \in (0,1) $, and QFAs when $ \lambda \in (0,1) $.
\end{itemize}
We do not know the answer of any question for any models defined with computable number or any subsets of computable numbers. We know only some decidable results for small automata given in the next subsection.

\subsection{Special cases}

From Remark \ref{Rem02}, we know that the positivity problem is decidable for depth 5. Thus the same problems are decidable also for the following models by using the conversions above:
\begin{itemize}
\item 5-state integer GFAs,
\item 4-state rational GFAs and also rational PFAs, and,
\item 2-state rational QFAs. 
\end{itemize}


\bibliographystyle{plain}
\bibliography{tcs}

\begin{thebibliography}{10}

\bibitem{AY15}
Andris Ambainis and Abuzer Yakary{\i}lmaz.
\newblock Automata and quantum computing.
\newblock Technical Report 1507.01988, arXiv, 2015.

\bibitem{Cohn}
P.~M. Cohn.
\newblock {\em Algebra}, volume~2.
\newblock John Wiley and Sons, 1977.

\bibitem{DHRSY14}
H.~G{\"{o}}kalp Demirci, Mika Hirvensalo, Klaus Reinhardt, A.~C.~Cem Say, and
  Abuzer Yakary{\i}lmaz.
\newblock Classical and quantum realtime alternating automata.
\newblock In {\em NCMA}, volume 304, pages 101--114. {\"{O}}sterreichische
  Computer Gesellschaft, 2014.
\newblock (arXiv:1407.0334).

\bibitem{Ei74}
Samuel Eilenberg.
\newblock {\em Automata, Languages, and Machines Volume A}.
\newblock Academic Press, London, 1974.

\bibitem{HaHaHiKa}
Vesa Halava, Tero Harju, Mika Hirvensalo, and Juhani Karhum\"{a}ki.
\newblock Skolem's problem -- on the border between decidability and
  undecidability.
\newblock TUCS Technical Reports 683, Turku Centre for Computer Science, 2005.

\bibitem{Hir11}
Mika Hirvensalo.
\newblock Quantum automata theory -- a review.
\newblock In {\em Algebraic Foundations in Computer Science}, volume 7020 of
  {\em LNCS}, pages 146--167, 2011.

\bibitem{Kleene}
Stephen~C. Kleene.
\newblock Representation of events in nerve nets and finite automata.
\newblock In {\em Automata Studies}, pages 3--41. Princeton University Press,
  1956.

\bibitem{KW97}
A.~Kondacs and J.~Watrous.
\newblock On the power of quantum finite state automata.
\newblock In {\em FOCS'97}, pages 66--75, 1997.

\bibitem{McCulloch}
Warren McCulloch and Walter Pitts.
\newblock A logical calculus of the ideas immanent in nervous activity.
\newblock {\em Bulletin of Mathematical Biophysics}, 7:115--133, 1943.

\bibitem{Mealy}
George Mealy.
\newblock A method for synthesizing sequential circuits.
\newblock {\em Bell Systems Technical Journal}, 34:1045--1079, 1955.

\bibitem{MC00}
Cristopher Moore and James~P. Crutchfield.
\newblock Quantum automata and quantum grammars.
\newblock {\em Theoretical Computer Science}, 237(1-2):275--306, 2000.

\bibitem{Moore}
Edward Moore.
\newblock Gedanken-experiments on sequential machines.
\newblock In {\em Automata Studies}, pages 129--153. Princeton University
  Press, 1956.

\bibitem{OW14B}
Jo{\"{e}}l Ouaknine and James Worrell.
\newblock On the positivity problem for simple linear recurrence sequences,.
\newblock In {\em Automata, Languages, and Programming - 41st International
  Colloquium, {ICALP} 2014, Part {II}}, volume 8573 of {\em Lecture Notes in
  Computer Science}, pages 318--329. Springer, 2014.

\bibitem{OW14A}
Jo{\"{e}}l Ouaknine and James Worrell.
\newblock Positivity problems for low-order linear recurrence sequences.
\newblock In {\em Proceedings of the Twenty-Fifth Annual {ACM-SIAM} Symposium
  on Discrete Algorithms ({SODA}'14)}, pages 366--379. {SIAM}, 2014.

\bibitem{OW14C}
Jo{\"{e}}l Ouaknine and James Worrell.
\newblock Ultimate positivity is decidable for simple linear recurrence
  sequences.
\newblock In {\em Automata, Languages, and Programming - 41st International
  Colloquium, {ICALP} 2014, Part {II}}, volume 8573 of {\em Lecture Notes in
  Computer Science}, pages 330--341. Springer, 2014.

\bibitem{Paz71}
A.~Paz.
\newblock {\em Introduction to Probabilistic Automata}.
\newblock Academic Press, New York, 1971.

\bibitem{Rab63}
M.~O. Rabin.
\newblock Probabilistic automata.
\newblock {\em Information and Control}, 6:230--243, 1963.

\bibitem{RS59}
M.O. Rabin and D.~Scott.
\newblock Finite automata and their decision problems.
\newblock {\em IBM Journal of Research and Development}, 3:114--125, 1959.

\bibitem{SayY14}
A.~C.~Cem Say and Abuzer Yakary{\i}lmaz.
\newblock Quantum finite automata: A modern introduction.
\newblock In {\em Computing with New Resources}, volume 8808 of {\em LNCS},
  pages 208--222. Springer, 2014.

\bibitem{Tur75}
Paavo Turakainenn.
\newblock Word-functions of stochastic and pseudo stochastic automata.
\newblock {\em Annales Academiae Scientiarum Fennicae, Series A. I,
  Mathematica}, 1:27--37, 1975.

\bibitem{YS10A}
Abuzer Yakary{\i}lmaz and A.~C.~Cem Say.
\newblock Languages recognized by nondeterministic quantum finite automata.
\newblock {\em Quantum Information and Computation}, 10(9\&10):747--770, 2010.

\bibitem{YS11A}
Abuzer Yakary{\i}lmaz and A.~C.~Cem Say.
\newblock Unbounded-error quantum computation with small space bounds.
\newblock {\em Information and Computation}, 279(6):873--892, 2011.

\bibitem{Yu}
Sheng Yu.
\newblock Regular languages.
\newblock In Grzegorz Rozenberg and Arto Salomaa, editors, {\em Handbook of
  Formal Languages, Vol. 1}, pages 41--110. Springer-Verlag New York, Inc.,
  1997.

\end{thebibliography}

\end{document}